\numberwithin{equation}{section}
\journalname{Communications in Mathematical Physics}
\begin{document}

\title{{Modeling the electron with Cosserat elasticity}}
\titlerunning{Modeling the electron with Cosserat elasticity}

\author{James Burnett\inst{1} and Dmitri Vassiliev\inst{2}}
\institute{
Department of Mathematics and Department of Physics \& Astronomy,
University College London,
Gower Street,
London WC1E 6BT,
UK.
\\ \email{J.Burnett@ucl.ac.uk}
\and
Department of Mathematics and Institute of Origins,
University College London,
Gower Street,
London WC1E 6BT,
UK.
\\ \email{D.Vassiliev@ucl.ac.uk}
}
\authorrunning{J. Burnett and D. Vassiliev}

\maketitle
\begin{abstract}
We suggest an alternative mathematical model for the electron in
dimension 1+2. We think of our (1+2)-dimensional spacetime as an
elastic continuum whose material points can experience no
displacements, only rotations. This framework is a special case of
the Cosserat theory of elasticity. Rotations of material points are
described mathematically by attaching to each geometric point an
orthonormal basis which gives a field of orthonormal bases called
the coframe. As the dynamical variables (unknowns) of our theory we
choose a coframe and a density. We then add an extra (third) spatial
dimension, extend our coframe and density into dimension 1+3, choose
a conformally invariant Lagrangian proportional to axial torsion
squared, roll up the extra dimension into a circle so as to
incorporate mass and return to our original (1+2)-dimensional
spacetime by separating out the extra coordinate. The main result of
our paper is the theorem stating that our model is equivalent to the
Dirac equation in dimension 1+2. In the process of analyzing our
model we also establish an abstract result, identifying a class of
nonlinear second order partial differential equations which reduce
to pairs of linear first order equations.
\end{abstract}

\section{Introduction}
\label{Introduction}

In this paper we consider an electron living in (1+2)-dimensional
Minkowski spacetime $\mathbb{M}^{1+2}$ with coordinates $x^\alpha$,
$\alpha=0,1,2$, and metric
$g_{\alpha\beta}=\operatorname{diag}(-1,+1,+1)$.
The reduction of spatial
dimension from 3 to 2 makes dealing with spin easier as we have only
two possibilities, spin up and spin down. At a technical level this
reduction of spatial dimension manifests itself in the fact that we
do not need a 4-component complex bispinor for describing the
electron, just a 2-component complex spinor.

The Dirac equation in $\mathbb{M}^{1+2}$ is
\begin{equation}
\label{Dirac equation}
[\sigma^\alpha{}_{\dot ab}(i\partial+A)_\alpha
\pm m\sigma^3{}_{\dot ab}]\eta^b=0.
\end{equation}
Here $m$ is the electron mass, $\sigma^\alpha$ are Pauli matrices
(see (\ref{Pauli matrix temporal}), (\ref{Pauli matrices spatial})),
$\partial_\alpha=\partial/\partial x^\alpha$
and $A_\alpha$ is a given external real electromagnetic field.
The tensor summation index $\alpha$ runs through the values $0,1,2$, the
spinor summation index $b$ runs through the values $1,2$
and the free spinor index $\dot a$ runs through the values $\dot1,\dot2$.
The spinor field $\eta:\mathbb{M}^{1+2}\to\mathbb{C}^2$ is the dynamical
variable (unknown quantity).
The two choices of sign give two versions of the Dirac equation
corresponding to spin up and down.

Equations (\ref{Dirac equation}) are, of course, a special case of
the Dirac equation in dimension 1+3. The latter is a system of four
complex equations for four complex unknowns and if one looks for
solutions which do not depend on $x^3$ then this system splits into
a pair of systems (\ref{Dirac equation}).

Throughout this paper all fields are assumed to be infinitely smooth
with no assumptions on their behavior at infinity. We focus on
understanding the geometric meaning of equation (\ref{Dirac equation})
rather than on fitting it into the framework of operator theory.

We suggest a new geometric interpretation of equation
(\ref{Dirac equation}). The basic idea is to view our (1+2)-dimensional
spacetime as an elastic continuum whose material points can
experience no displacements, only rotations, with rotations of
different material points being totally independent. The idea of
rotating material points may seem exotic, however it has long been
accepted in continuum mechanics within the Cosserat theory of
elasticity \cite{Co}. This idea also lies at the heart of the theory
of \emph{teleparallelism} (=~absolute parallelism
=~fernparallelismus), a subject promoted by A.~Einstein and
\'E.~Cartan \cite{MR543192,MR2276051,unzicker-2005-}.
With regards to the latter it is interesting that
Cartan acknowledged \cite{cartan1} that he drew
inspiration from the `beautiful' work of the Cosserat brothers.

An elastic continuum with no displacements, only rotations, is, of
course, a limit case of Cosserat elasticity. The other limit case is
classical elasticity with displacements only and no
(micro)rotations.

Rotations of material points of the (1+2)-dimensional elastic
continuum are described mathematically by attaching to each
geometric point of Minkowski spacetime $\mathbb{M}^{1+2}$ an orthonormal
basis, which gives a field of orthonormal bases called the
\emph{frame} or \emph{coframe}, depending on whether one prefers
dealing with vectors or covectors. Our model will be built on the
basis of exterior calculus so for us it will be more natural to use
the coframe.

Our model is described in Section \ref{Our model}.
Subsequent sections contain mathematical analysis
culminating in Theorem \ref{main theorem}
(see Section \ref{Main result})
which establishes that our model is equivalent to the Dirac equation
(\ref{Dirac equation}).

The mathematical model presented in Section \ref{Our model} is quite
simple. However, seeing that this model generates the Dirac equation
(\ref{Dirac equation}) is not easy. The main difficulties are as
follows.
\begin{itemize}
\item
The dynamical variables in our model and the Dirac model are
different. We will overcome this difficulty by performing a
nonlinear change of dynamical variables given by the explicit
formulas (\ref{formula for density})--(\ref{formula for coframe elements 1 and 2}).
\item
We incorporate mass and electromagnetic field into our model by
means of a Kaluza--Klein extension, i.e. by adding an extra spatial
dimension and then separating out the extra coordinate $x^3$. Now,
our field equation (Euler--Lagrange equation) will turn out to be
nonlinear so the fact that it admits separation of variables is
nontrivial. We will establish separation of variables by performing
explicit calculations. We suspect that the underlying
group-theoretic reason for our nonlinear field equation admitting
separation of variables is the fact that our model is
$\mathrm{U}(1)$-invariant, i.e.~it is invariant under the
multiplication of the spinor field by a complex constant of modulus
1. Hence, it is feasible that one could perform the separation of
variables arguments without writing down the explicit form of the
field equation.
\item
Our field equation will be second order so it is unclear how it can
be reduced to a first order equation (\ref{Dirac equation}). This
issue will be addressed in Appendix~\ref{Nonlinear second order equations}.
Namely, in this appendix we
prove an abstract lemma showing that a certain class of nonlinear
second order partial differential equations reduces to pairs of
linear first order equations. To our knowledge, this abstract lemma is a new
result.
\end{itemize}

Our paper is a development of the publication \cite{1001.4726} where
a similar model was suggested for a massless fermion (neutrino).

\section{Our model}
\label{Our model}

The coframe $\vartheta$ is a triple of orthonormal covector fields
$\vartheta^j$, $j=0,1,2$, in $\mathbb{M}^{1+2}$. Each covector field
$\vartheta^j$ can be written more explicitly as
$\vartheta^j{}_\alpha$ where the tensor index $\alpha=0,1,2$
enumerates the components.
Of course, orthonormality is understood
in the Lorentzian sense: the inner product
$\vartheta^j\cdot\vartheta^k=g^{\alpha\beta}\vartheta^j{}_\alpha\vartheta^k{}_\beta$
is $\ -1\ $ if $j=k=0$, $\ +1\ $ if $j=k=1$ or $j=k=2$, and zero otherwise.

The orthonormality condition for the
coframe can be represented as a single tensor identity
\begin{equation}
\label{constraint for coframe}
g=o_{jk}\vartheta^j\otimes\vartheta^k
\end{equation}
where
\begin{equation}
\label{formula for os}
o_{jk}=o^{jk}:=\operatorname{diag}(-1,+1,+1).
\end{equation}
For the sake of clarity we repeat formula~(\ref{constraint for coframe})
giving tensor indices explicitly and performing summation over frame
indices explicitly:
$
g_{\alpha\beta}
=-\vartheta^0{}_\alpha\vartheta^0{}_\beta
+\vartheta^1{}_\alpha\vartheta^1{}_\beta
+\vartheta^2{}_\alpha\vartheta^2{}_\beta
$
where $\alpha$ and $\beta$ run through the values $0,1,2$.
We view the identity (\ref{constraint for coframe}) as a kinematic
constraint: the covector fields
$\vartheta^j$ are chosen so that they
satisfy~(\ref{constraint for coframe}), which leaves us with three real
degrees of freedom at every point of $\mathbb{M}^{1+2}$.
If one views $\vartheta^j{}_\alpha$ as a $3\times3$ real
matrix-function, then condition (\ref{constraint for coframe})
means that this matrix-function is
pseudo-orthogonal, i.e.~orthogonal with respect to the Lorentzian inner product.

We choose to work with coframes
satisfying conditions
\begin{equation}
\label{conditions for proper coframe}
\det\vartheta^j{}_\alpha=+1>0,
\qquad
\vartheta^0{}_0>0
\end{equation}
which single out coframes that can be obtained from the
trivial (aligned with coordinate lines) coframe
$\vartheta^j{}_\alpha=\delta^j{}_\alpha$
by proper Lorentz transformations.

As dynamical variables in our model we choose the coframe
$\vartheta$ and a positive density $\rho$. Our coframe and density
are functions of coordinates $x^\alpha$, $\alpha=0,1,2$,
in $\mathbb{M}^{1+2}$.
At a physical level, making the density $\rho$ a
dynamical variable means that we view our continuum more like a
fluid rather than a solid: we allow the material to
redistribute itself so that it finds its equilibrium distribution.
Note that the total number of real dynamical degrees of freedom
contained in the coframe $\vartheta$ and positive density $\rho$
is four, exactly as in a two-component complex-valued spinor field $\eta$.

In order to incorporate into our model mass and electromagnetic
field we perform a Kaluza--Klein extension: we extend our original
(1+2)-dimensional Minkowski spacetime $\mathbb{M}^{1+2}$
to (1+3)-dimensional Minkowski spacetime $\mathbb{M}^{1+3}$
by adding the extra spatial coordinate $x^3$.
The metric on $\mathbb{M}^{1+3}$ is
$\mathbf{g}_{{\bm\alpha}{\bm\beta}}=\operatorname{diag}(-1,+1,+1,+1)$.
Here and further on we use \textbf{bold} type for extended
quantities. Say, the use of bold type in the tensor indices of
$\mathbf{g}_{{\bm\alpha}{\bm\beta}}$
indicates that ${\bm{\alpha}}$ and ${\bm{\beta}}$
run through the values $0,1,2,3$.

We extend our coframe as
\begin{equation}
\label{extension of coframe}
{\bm\vartheta}^j{}_{\bm\alpha}=
\begin{pmatrix}
\vartheta^j{}_\alpha\\0
\end{pmatrix},
\qquad j=0,1,2,
\qquad
{\bm\vartheta}^3{}_{\bm\alpha}=
\begin{pmatrix}
0_\alpha\\1
\end{pmatrix}
\end{equation}
where the bold tensor index ${\bm\alpha}$ runs through the values
$0,1,2,3$, whereas its non-bold counterpart $\alpha$ runs through
the values $0,1,2$. In particular, the $0_\alpha$
in formula (\ref{extension of coframe}) stands for a column
of three zeros.

Our original (1+2)-dimensional coframe $\vartheta$, which was initially
a function of $(x^0,x^1,x^2)$ only, is now allowed to depend on
$x^3$ in an arbitrary way, as long as the kinematic constraint
(\ref{constraint for coframe}) is maintained. Our only
restriction on the choice of extended (1+3)-dimensional coframe
$\bm\vartheta$ is the condition that the last element of the coframe
is prescribed as the conormal to the original Minkowski spacetime $\mathbb{M}^{1+2}$,
see last formula (\ref{extension of coframe}).

We also extend our positive density $\rho$ allowing arbitrary
dependence on $x^3$.
We retain the non-bold type for the extended $\rho$.

The coframe elements ${\bm\vartheta}^{\mathbf{j}}$ are different at
different points $x\in\mathbb{M}^{1+3}$ and this causes
deformations. As a measure of these
``rotational deformations'' we choose \emph{axial torsion} which is
the 3-form defined by the formula
\begin{equation}
\label{definition of axial torsion extended}
\mathbf{T}^\mathrm{ax}:=\frac13\mathbf{o}_{\mathbf{j}\mathbf{k}}
{\bm\vartheta}^{\mathbf{j}}\wedge d{\bm\vartheta}^{\mathbf{k}}
\end{equation}
where $\mathbf{o}_{\mathbf{j}\mathbf{k}}
=\mathbf{o}^{\mathbf{j}\mathbf{k}}:=\mathrm{diag}(-1,+1,+1,+1)$
(compare with formula (\ref{formula for os}))
and $\,d\,$ denotes the exterior derivative on $\mathbb{M}^{1+3}$.
Here ``torsion'' stands for ``torsion of the teleparallel connection''
with ``teleparallel connection'' defined by the condition that
the covariant derivative of each coframe element ${\bm\vartheta}^{\mathbf{j}}$ is
zero; see Appendix A of \cite{MR2573111} for a concise exposition.
``Axial torsion'' is the totally antisymmetric part of the
torsion tensor (\ref{definition of torsion extended}).

We choose the basic Lagrangian density of our mathematical model as
\begin{equation}
\label{basic Lagrangian density}
L({\bm\vartheta},\rho):=\|\mathbf{T}^\mathrm{ax}\|^2\rho
\end{equation}
where
$\|\mathbf{T}^\mathrm{ax}\|^2=\frac1{3!}
\mathbf{T}^\mathrm{ax}_{{\bm\alpha}{\bm\beta}{\bm\gamma}}
\mathbf{T}^\mathrm{ax}_{{\bm\kappa}{\bm\lambda}{\bm\mu}}
\mathbf{g}^{{\bm\alpha}{\bm\kappa}}
\mathbf{g}^{{\bm\beta}{\bm\lambda}}
\mathbf{g}^{{\bm\gamma}{\bm\mu}}$.
The main motivation behind the choice of
Lagrangian density (\ref{basic Lagrangian density})
is the fact that it is conformally invariant:
it does not change if we rescale the coframe as
${\bm\vartheta}^{\mathbf{j}}\mapsto e^{\mathbf{h}}{\bm\vartheta}^{\mathbf{j}}$,
metric as
$\mathbf{g}_{{\bm\alpha}{\bm\beta}}\mapsto
e^{2\mathbf{h}}\mathbf{g}_{{\bm\alpha}{\bm\beta}}$
and density as
$\rho\mapsto e^{2\mathbf{h}}\rho$
where
$\mathbf{h}:\mathbb{M}^{1+3}\to\mathbb{R}$ is an arbitrary scalar function.
At this point it is important to note that out that our Kaluza--Klein
extension procedure does not actually allow for conformal rescalings
because the last formula (\ref{extension of coframe}) is very specific.
Thus, our logic is that we choose a Lagrangian density
(\ref{basic Lagrangian density}) which \emph{would be} conformally invariant
if not for the prescriptive nature of the Kaluza--Klein construction.
This is in line with the view that mass breaks conformal invariance.
The electron mass $m$ will appear below in formulas
(\ref{periodicity condition coframe}) and
(\ref{periodicity condition density}).

Substituting (\ref{extension of coframe}) into
(\ref{definition of axial torsion extended}) we get
\begin{equation}
\label{definition of axial torsion extended simplified}
\mathbf{T}^\mathrm{ax}=T^\mathrm{ax}-
{\bm\vartheta}^3\wedge D_3\vartheta
\end{equation}
where
\begin{equation}
\label{definition of axial torsion}
T^\mathrm{ax}:=\frac13o_{jk}
\vartheta^j\wedge d\vartheta^k
\end{equation}
is the axial torsion in original (1+2)-dimensional spacetime
(with $\,d\,$ now denoting the exterior derivative on $\mathbb{M}^{1+2}$)
and $D_3\vartheta$ is the 2-form
\begin{equation}
\label{definition of dot vartheta}
D_3\vartheta:=\frac13o_{jk}
\vartheta^j\wedge\partial_3\vartheta^k.
\end{equation}
The 2-form $D_3\vartheta$ characterizes the rotation of
the coframe $\vartheta$ as we move along the coordinate $x^3$ and is,
in effect, an analogue of angular velocity.

Substituting (\ref{definition of axial torsion extended simplified}) into
(\ref{basic Lagrangian density}) we rewrite our basic Lagrangian density as
\begin{equation}
\label{basic Lagrangian density 1}
L(\vartheta,\rho):=
(\|T^\mathrm{ax}\|^2+\|D_3\vartheta\|^2)\rho.
\end{equation}

We now incorporate the electron mass $m$ into our model by imposing
the periodicity conditions
\begin{equation}
\label{periodicity condition coframe}
\vartheta(x^0,x^1,x^2,x^3+\pi/m)=\vartheta(x^0,x^1,x^2,x^3),
\end{equation}
\begin{equation}
\label{periodicity condition density}
\rho(x^0,x^1,x^2,x^3+\pi/m)=\rho(x^0,x^1,x^2,x^3).
\end{equation}
Conditions (\ref{periodicity condition coframe}) and
(\ref{periodicity condition density}) mean that we
make the coordinate $x^3$ cyclic with period~$\,\frac\pi m\,$.
In other words, we effectively roll up
our third spatial dimension into a circle of radius $\frac1{2m}\,$.

Finally, we incorporate the prescribed electromagnetic (co)vector potential
$A$ into our model by formally mixing up the partial derivatives appearing
in the definition of axial torsion (\ref{definition of axial torsion})
as
\begin{equation}
\label{mixing up the derivatives}
\partial_\alpha\mapsto\partial_\alpha+m^{-1}\!A_\alpha\partial_3\,,
\qquad\alpha=0,1,2\,.
\end{equation}
As a result, our Lagrangian density
(\ref{basic Lagrangian density 1}) turns into
\begin{equation}
\label{our Lagrangian density with A}
L(\vartheta,\rho):=
(\|T_A^\mathrm{ax}\|^2+\|D_3\vartheta\|^2)\rho,
\end{equation}
where
\begin{equation}
\label{definition of axial torsion with A}
T_A^\mathrm{ax}:=
T^\mathrm{ax}-m^{-1}A\wedge D_3\vartheta.
\end{equation}

Let us summarize the above construction. The Lagrangian density that
we shall be studying is given by formula
(\ref{our Lagrangian density with A})
where the 3-form $T_A^\mathrm{ax}$ and 2-form $D_3\vartheta$ are
defined by formulas
(\ref{definition of axial torsion}),
(\ref{definition of dot vartheta})
and (\ref{definition of axial torsion with A}).
The corresponding action (variational functional) is
\begin{equation}
\label{our action}
S(\vartheta,\rho):=
\int_{\mathbb{M}^{1+3}}L(\vartheta,\rho)\,dx^0dx^1dx^2dx^3\,;
\end{equation}
of course, the integral in (\ref{our action}) need
not converge as we will be using it only for the purpose of deriving
field equations (Euler--Lagrange equations).
Our dynamical variables are the coframe $\vartheta$ and density
$\rho$ which live in the original (1+2)-dimensional spacetime
but depend on the extra spatial coordinate $x^3$.
We seek solutions which are periodic in $x^3$,
see formulas
(\ref{periodicity condition coframe}) and
(\ref{periodicity condition density}).

Our field equations are obtained by
varying the action (\ref{our action}) with respect to the
coframe $\vartheta$ and density $\rho$. Varying with respect to the
density $\rho$ is easy: this gives the field equation
$\|T_A^\mathrm{ax}\|^2+\|D_3\vartheta\|^2=0$
which is equivalent to
$L(\vartheta,\rho)=0$. Varying with respect to the coframe
$\vartheta$ is more difficult because we have to maintain the kinematic
constraint (\ref{constraint for coframe}).
A technique for varying the coframe with kinematic constraint
(\ref{constraint for coframe}) was described in Appendix~B of
\cite{MR2573111} but we do not use it in the current paper.

\section{Switching to the language of spinors}
\label{Switching to the language of spinors}

As pointed out in the previous section, varying the
coframe subject to the kinematic constraint
(\ref{constraint for coframe}) is not an easy task.
This technical difficulty can be overcome by switching
to a different dynamical variable. Namely, it is known
that in dimension 1+2 a coframe $\vartheta$ and a
positive density $\rho$ are equivalent to a 2-component
complex-valued spinor field
$
\xi=\xi^a=
\begin{pmatrix}
\xi^1\\
\xi^2
\end{pmatrix}
$
satisfying the inequality
\begin{equation}
\label{density is positive}
\bar\xi^{\dot a}\sigma_{3\dot ab}\xi^b>0.
\end{equation}
The explicit formulas establishing this equivalence are
\begin{equation}
\label{formula for density}
\rho=\bar\xi^{\dot a}\sigma_{3\dot ab}\xi^b,
\end{equation}
\begin{equation}
\label{formula for coframe element 0}
\vartheta^0{}_\alpha=\rho^{-1}
\bar\xi^{\dot a}\sigma_{\alpha\dot ab}\xi^b,
\end{equation}
\begin{equation}
\label{formula for coframe elements 1 and 2}
(\vartheta^1+i\vartheta^2)_\alpha=\rho^{-1}
\epsilon^{\dot c\dot b}\sigma_{3\dot ba}\xi^a\sigma_{\alpha\dot cd}\xi^d.
\end{equation}
Here $\sigma$ are Pauli matrices and $\epsilon$ is ``metric spinor''
(see (\ref{metric spinor})--(\ref{Pauli matrices spatial})),
the free tensor index $\alpha$ runs through the values
$0,1,2$, and the spinor summation indices run through the values
$1,2$ or $\dot1,\dot2$. The advantage of switching to a spinor
field $\xi$ is that there are no kinematic constraints on its
components, so the derivation of field equations becomes
straightforward.

Formulas
(\ref{formula for density})--(\ref{formula for coframe elements 1 and 2})
are a variant of those from \cite{1001.4726}: in \cite{1001.4726}
these formulas were written for dimension 3,
i.e.~for 3-dimensional Euclidean space,
whereas in the current
paper we write them for dimension 1+2,
i.e.~for (1+2)-dimensional Minkowski spacetime.
Both the formulas from \cite{1001.4726} and formulas
(\ref{formula for density})--(\ref{formula for coframe elements 1 and 2})
are a special case of those from \cite{MR0332092}.

\begin{remark}
\label{remark about indeterminate sign}
The right-hand sides of formulas
(\ref{formula for density})--(\ref{formula for coframe elements 1 and 2})
are invariant under the change of sign of $\xi$. Hence, the
correspondence between coframe and positive density on the one hand
and spinor field satisfying condition (\ref{density is positive}) on
the other is one to two. A spinor field is, effectively, a square
root of a coframe and a density. The fact that the spinor field has
indeterminate sign does not cause problems as long as we work on a
simply connected open set\footnote{Here and further on the notions of openness and
connectedness of subsets of $\mathbb{M}^{1+2}$
are understood in the Euclidean sense, i.e.~in terms of a positive
3-dimensional metric.},
such as the whole Minkowski space
$\mathbb{M}^{1+2}$.
Note that a similar issue (extraction of
a single-valued ``square root'' of a tensor)
arises in the mathematical theory of liquid crystals~\cite{ball_and_zarnescu}.
\end{remark}

We now need to express the differential forms
(\ref{definition of axial torsion}),
(\ref{definition of dot vartheta})
and
(\ref{definition of axial torsion with A})
via the spinor field~$\xi$.
This is done by direct substitution
of formulas
(\ref{formula for density})--(\ref{formula for coframe elements 1 and 2})
giving
\begin{equation}
\label{axial torsion via spinor}
*T^\mathrm{ax}=-
\frac{
2i(
\bar\xi^{\dot a}\sigma^\alpha{}_{\dot ab}\partial_\alpha\xi^b
-
\xi^b\sigma^\alpha{}_{\dot ab}\partial_\alpha\bar\xi^{\dot a}
)
}
{
3\bar\xi^{\dot c}\sigma_{3\dot cd}\xi^d
}\,,
\end{equation}
\begin{equation}
\label{dot vartheta via spinor}
(*D_3\vartheta)_\alpha=
\frac{
2i(
\bar\xi^{\dot a}\sigma_{\alpha\dot ab}\partial_3\xi^b
-
\xi^b\sigma_{\alpha\dot ab}\partial_3\bar\xi^{\dot a}
)
}
{
3\bar\xi^{\dot c}\sigma_{3\dot cd}\xi^d
}\,,
\end{equation}
\begin{equation}
\label{axial torsion via spinor with A}
*T_A^\mathrm{ax}=-
\frac{
2i(
\bar\xi^{\dot a}\sigma^\alpha{}_{\dot ab}
(\partial_\alpha+m^{-1}\!A_\alpha\partial_3)\xi^b
-
\xi^b\sigma^\alpha{}_{\dot ab}
(\partial_\alpha+m^{-1}\!A_\alpha\partial_3)\bar\xi^{\dot a}
)
}
{
3\bar\xi^{\dot c}\sigma_{3\dot cd}\xi^d
}\,.
\end{equation}
The tensor summation index $\alpha$ in formulas
(\ref{axial torsion via spinor}) and
(\ref{axial torsion via spinor with A})
and the free tensor index
$\alpha$ in formula (\ref{dot vartheta via spinor}) run through
the values $0,1,2$.
Formulas
(\ref{axial torsion via spinor}) and (\ref{dot vartheta via spinor})
are, of course, a variant of those from \cite{1001.4726}:
we simply turned 3-dimensional Euclidean space into (1+2)-dimensional Minkowski
space and replaced the extra coordinate $x^0$ with the extra coordinate $x^3$.

Substituting formulas
(\ref{axial torsion via spinor with A}) and (\ref{dot vartheta via spinor})
into
(\ref{our Lagrangian density with A})
we arrive at the following self-contained explicit
spinor representation of our Lagrangian density
\begin{multline}
\label{our Lagrangian density with A via spinor}
L(\xi)=
-\frac{4}{9\bar\xi^{\dot c}\sigma_{3\dot cd}\xi^d}
\\
\Bigl(
\bigl[
i(
\bar\xi^{\dot a}\sigma^\alpha{}_{\dot ab}
(\partial_\alpha+m^{-1}\!A_\alpha\partial_3)\xi^b
-
\xi^b\sigma^\alpha{}_{\dot ab}
(\partial_\alpha+m^{-1}\!A_\alpha\partial_3)\bar\xi^{\dot a}
)
\bigr]^2
\\
+\bigl\|
i(
\bar\xi^{\dot a}\sigma_{\alpha\dot ab}
\partial_3\xi^b
-
\xi^b\sigma_{\alpha\dot ab}
\partial_3\bar\xi^{\dot a}
)
\bigr\|^2
\Bigr).
\end{multline}
Here and
further on we write our Lagrangian density and our action as
$L(\xi)$ and $S(\xi)$ rather than $L(\vartheta,\rho)$ and
$S(\vartheta,\rho)$, thus indicating that we have switched to
spinors. The spinor field $\xi$ satisfying condition
(\ref{density is positive})
is the new dynamical variable.

The field equation for our Lagrangian
density~(\ref{our Lagrangian density with A via spinor}) is
\begin{multline}
\label{field equation}
\!\!\!\!\!
\frac{4i}3\Bigl(
(*T_A^\mathrm{ax})\sigma^\alpha{}_{\dot ab}
(\partial_\alpha+m^{-1}\!A_\alpha\partial_3)\xi^b
+\sigma^\alpha{}_{\dot ab}
(\partial_\alpha+m^{-1}\!A_\alpha\partial_3)
((*T_A^\mathrm{ax})\xi^b)
\\
-(*D_3\vartheta)_\alpha\sigma^\alpha{}_{\dot ab}
\partial_3\xi^b
-\sigma^\alpha{}_{\dot ab}
\partial_3((*D_3\vartheta)_\alpha\xi^b)
\Bigr)
-\rho^{-1}L\sigma_{3\dot ab}\xi^b
=0
\end{multline}
where the quantities
$*T_A^\mathrm{ax}$,
$*D_3\vartheta$,
$\rho$ and
$L$
are expressed via the spinor field $\xi$ in accordance with formulas
(\ref{axial torsion via spinor with A}),
(\ref{dot vartheta via spinor}),
(\ref{formula for density}) and
(\ref{our Lagrangian density with A via spinor}).

We seek solutions
of the field equation (\ref{field equation})
which satisfy the periodicity condition
\begin{equation}
\label{periodicity condition spinor}
\xi(x^0,x^1,x^2,x^3+\pi/m)=\xi(x^0,x^1,x^2,x^3),
\end{equation}
or the antiperiodicity condition
\begin{equation}
\label{antiperiodicity condition spinor}
\xi(x^0,x^1,x^2,x^3+\pi/m)=-\xi(x^0,x^1,x^2,x^3).
\end{equation}
The above periodicity/antiperiodicity conditions are our original
periodicity conditions
(\ref{periodicity condition coframe}) and (\ref{periodicity condition density})
rewritten in terms of the spinor field.
The splitting into periodicity/antiperiodicity occurs because the
spinor field corresponding to a coframe and a density is determined
uniquely modulo sign, see Remark \ref{remark about indeterminate sign}.

\section{Separating out the coordinate $x^3$}
\label{Separating out the coordinate x3}

Our field equation (\ref{field equation}) is highly nonlinear
and one does not expect it to admit separation of variables.
Nevertheless,
we seek solutions of the form
\begin{equation}
\label{xi in terms of eta}
\xi(x^0,x^1,x^2,x^3)=
\eta(x^0,x^1,x^2)\,e^{\mp imx^3}.
\end{equation}
Note that such solutions automatically satisfy the antiperiodicity
condition (\ref{antiperiodicity condition spinor}):
the coframe corresponding to a spinor field of the
form (\ref{xi in terms of eta}) experiences one full turn
(clockwise or anticklockwise) in
the $(\vartheta^1,\vartheta^2)$-plane as $x^3$ runs from 0 to~$\frac\pi m$.

Substituting formula
(\ref{xi in terms of eta})
into
(\ref{axial torsion via spinor with A}),
(\ref{dot vartheta via spinor}),
(\ref{formula for density}) and
(\ref{our Lagrangian density with A via spinor})
we get
\begin{equation}
\label{axial torsion via spinor with A for eta}
*T_{A\pm}^\mathrm{ax}=-
\frac{
2(
\bar\eta^{\dot a}\sigma^\alpha{}_{\dot ab}
(i\partial\pm A)_\alpha\eta^b
-
\eta^b\sigma^\alpha{}_{\dot ab}
(i\partial\mp A)_\alpha\bar\eta^{\dot a}
)
}
{
3\bar\eta^{\dot c}\sigma_{3\dot cd}\eta^d
}\,,
\end{equation}
\begin{equation}
\label{dot vartheta via spinor with A for eta}
(*D_3\vartheta)_\alpha=
\pm\frac{
4m\bar\eta^{\dot a}\sigma_{\alpha\dot ab}\eta^b
}
{
3\bar\eta^{\dot c}\sigma_{3\dot cd}\eta^d
}\,,
\end{equation}
\begin{equation}
\label{formula for density for eta}
\rho=\bar\eta^{\dot a}\sigma_{3\dot ab}\eta^b,
\end{equation}
\begin{multline}
\label{our Lagrangian density with A via spinor for eta}
L_\pm(\eta)=
-\frac{16}{9\bar\eta^{\dot c}\sigma_{3\dot cd}\eta^d}
\\
\Bigl(
\bigl[
\tfrac12(
\bar\eta^{\dot a}\sigma^\alpha{}_{\dot ab}
(i\partial\pm A)_\alpha\eta^b
-
\eta^b\sigma^\alpha{}_{\dot ab}
(i\partial\mp A)_\alpha\bar\eta^{\dot a}
)
\bigr]^2
-
(m\bar\eta^{\dot a}\sigma_{3\dot ab}\eta^b)^2
\Bigr)
\end{multline}
where the signs agree with those in (\ref{xi in terms of eta})
(upper sign corresponds to upper sign and lower sign corresponds to
lower sign).

Note that the quantities
(\ref{axial torsion via spinor with A for eta})--(\ref{our Lagrangian density with A via spinor for eta})
do not depend on $x^3$, which simplifies the next step:
substituting
(\ref{xi in terms of eta})
into our field equation
(\ref{field equation})
and dividing through by the common factor
$e^{\mp imx^3}$
we get
\begin{multline}
\label{field equation for eta}
\!\!\!\!\!
\frac43\Bigl(
(*T_{A\pm}^\mathrm{ax})\sigma^\alpha{}_{\dot ab}
(i\partial\pm A)_\alpha\eta^b
+\sigma^\alpha{}_{\dot ab}
(i\partial\pm A)_\alpha
((*T_{A\pm}^\mathrm{ax})\eta^b)
\Bigr)
\\
+\frac{32m^2}9\sigma^3{}_{\dot ab}\eta^b
-\rho^{-1}L_\pm\sigma_{3\dot ab}\eta^b
=0.
\end{multline}

Observe that formulas
(\ref{axial torsion via spinor with A for eta})--(\ref{field equation for eta})
do not contain $x^3$.
Thus, we have shown that our
field equation
(\ref{field equation})
admits separation of
variables, i.e.~one can seek solutions of the
form~(\ref{xi in terms of eta}).

Consider now the action
\begin{equation}
\label{our action for eta}
S_\pm(\eta):=
\int_{\mathbb{M}^{1+2}}L_\pm(\eta)\,dx^0dx^1dx^2
\end{equation}
where $L_\pm(\eta)$ is the Lagrangian density
(\ref{our Lagrangian density with A via spinor for eta}).
It is easy to see that equation
(\ref{field equation for eta}) is the field equation (Euler--Lagrange equation)
for the action~(\ref{our action for eta}).

In the remainder of the paper we do not use the explicit form of the
field equation (\ref{field equation for eta}), dealing only
with the Lagrangian density
(\ref{our Lagrangian density with A via spinor for eta})
and action (\ref{our action for eta}).
We needed the explicit form of field equations,
(\ref{field equation}) and
(\ref{field equation for eta}),
only to justify separation of variables.

We give for reference a more compact representation of our
Lagrangian density
(\ref{our Lagrangian density with A via spinor for eta})
in terms of axial torsion $T_{A\pm}^\mathrm{ax}$
(see formula (\ref{axial torsion via spinor with A for eta}))
and density~$\rho$
(see formula~(\ref{formula for density for eta})):
\begin{equation}
\label{our Lagrangian density plus minus compact}
L_\pm(\eta)=-\Bigl(
\bigl(*T_{A\pm}^\mathrm{ax}\bigr)^2-\frac{16}9m^2
\Bigr)\rho\,.
\end{equation}
Of course, formula (\ref{our Lagrangian density plus minus compact})
is our original formula (\ref{our Lagrangian density with A})
with $x^3$ separated out.
The choice of dynamical variables in the Lagrangian density
(\ref{our Lagrangian density plus minus compact}) is up to the user: one can either
use the $x^3$-independent spinor field $\eta$ or, equivalently,
the corresponding $x^3$-independent coframe and $x^3$-independent density
(the latter are related to $\eta$ by formulas
(\ref{formula for density})--(\ref{formula for coframe elements 1 and 2})
with $\xi$ replaced by $\eta$).

\section{Main result}
\label{Main result}

Let $D_{rs}$ be the linear differential operator mapping
undotted spinor fields into dotted spinor fields in accordance with formula
\begin{equation}
\label{Dirac operator}
\eta\ \mapsto
\
D_{rs}\eta=
\sigma^\alpha{}_{\dot ab}(i\partial_\alpha+rA_\alpha)\eta^b
+sm\sigma^3{}_{\dot ab}\eta^b
\end{equation}
where the tensor summation index $\alpha$ runs through the values
$0,1,2$ and the letters $r$ and $s$ take, independently,
symbolic values $\pm$ (as in $D_{rs}$)
or numerical values $\pm1$ (as in the RHS of formula
(\ref{Dirac operator})),
depending on the context.

The main result of our paper is

\begin{theorem}
\label{main theorem}
Let $\Omega$ be an open
subset of $\mathbb{M}^{1+2}$
and let $\eta:\Omega\to\mathbb{C}^2$ be a spinor field satisfying
the condition
\begin{equation}
\label{density is positive for eta}
\bar\eta^{\dot a}\sigma_{3\dot ab}\eta^b>0
\end{equation}
(compare with (\ref{density is positive})).
Then $\eta$
is a solution of the field equation
for the Lagrangian density $L_+$
if and only if it is a solution of
the Dirac equation $D_{++}\eta=0$
or the Dirac equation $D_{+-}\eta=0$,
and a solution of the field equation
for the Lagrangian density $L_-$
if and only if it is a solution of
the Dirac equation $D_{-+}\eta=0$
or the Dirac equation $D_{--}\eta=0$.
\end{theorem}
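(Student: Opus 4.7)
The strategy is to exhibit $L_\pm$ as a product of two real first-order scalar factors, each naturally associated with one of the Dirac operators $D_{\pm+}$, $D_{\pm-}$, and then to invoke the abstract lemma of Appendix~\ref{Nonlinear second order equations}, which is designed precisely to turn a factorised nonlinear second-order variational problem into the disjunction of two first-order spinor equations. The positivity hypothesis~(\ref{density is positive for eta}) keeps $\rho$ bounded away from zero, so all of the manipulations below remain regular.

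\textbf{Factorisation.} Starting from the compact form~(\ref{our Lagrangian density plus minus compact}) and using the elementary identity $a^2-b^2=(a-b)(a+b)$, write
\begin{equation*}
L_\pm \;=\; -\bigl(*T_{A\pm}^{\mathrm{ax}}-\tfrac{4m}{3}\bigr)\bigl(*T_{A\pm}^{\mathrm{ax}}+\tfrac{4m}{3}\bigr)\rho.
\end{equation*}
Now combine the spinor formula (\ref{axial torsion via spinor with A for eta}) with the identity $m\rho=m\bar\eta^{\dot a}\sigma_{3\dot ab}\eta^b$. Using the hermiticity properties of the Pauli matrices one checks that the second term in the numerator of (\ref{axial torsion via spinor with A for eta}) is the complex conjugate of the first, so each factor can be recast as the real part of a Dirac bilinear:
\begin{equation*}
*T_{A\pm}^{\mathrm{ax}}\mp\tfrac{4m}{3} \;=\; -\tfrac{2}{3\rho}\Bigl(\bar\eta^{\dot a}(D_{\pm\pm}\eta)_{\dot a}+\overline{\bar\eta^{\dot a}(D_{\pm\pm}\eta)_{\dot a}}\Bigr),
\end{equation*}
with the signs correlated. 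Hence, up to the nonvanishing prefactor $-\tfrac{4}{9\rho}$, the Lagrangian $L_\pm$ is the product of the two real scalars $\bar\eta^{\dot a}(D_{\pm+}\eta)_{\dot a}+\mathrm{c.c.}$ and $\bar\eta^{\dot a}(D_{\pm-}\eta)_{\dot a}+\mathrm{c.c.}$ In particular, each Dirac equation $D_{\pm s}\eta=0$ forces the corresponding factor (and hence $L_\pm$ itself) to vanish identically.

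\textbf{Reduction via the abstract lemma.} Each factor in the product is bilinear in the pair $(\eta,\bar\eta)$ and first-order in derivatives, of exactly the form addressed by the abstract lemma of Appendix~\ref{Nonlinear second order equations}. Applied to our Lagrangian, the lemma yields the equivalence between the Euler--Lagrange equation for $L_\pm$ and the disjunction ``$D_{\pm+}\eta=0$ or $D_{\pm-}\eta=0$'', which is the assertion of Theorem~\ref{main theorem}. The \emph{main obstacle}, and the reason the abstract lemma is needed rather than a one-line factorisation argument, is the gap between a scalar vanishing (each factor, being a real number, gives only one real equation) and the full spinor Dirac equation (four real equations). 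Naive variational calculus applied to the product only delivers the scalar statement; the lemma bridges the gap by exploiting the bilinear structure of the factors and the positivity of $\rho$ to perform a ``spinorial square root'' extraction, in the same spirit as the indeterminate-sign discussion of Remark~\ref{remark about indeterminate sign}. The key technical point to verify is that the $\rho^{-1}$ prefactor does not spoil the hypotheses of the lemma, i.e.~that the variation of $\rho^{-1}$ contributes only terms proportional to the two factors and thus vanishes on either Dirac solution.
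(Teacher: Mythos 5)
Your overall strategy is the paper's: rewrite $L_\pm$ via the compact form (\ref{our Lagrangian density plus minus compact}), recognise the two linear factors as the Dirac Lagrangian densities $L_{r\pm}(\eta)=\operatorname{Re}\bigl(\bar\eta^{\dot a}(D_{r\pm}\eta)_{\dot a}\bigr)$ of (\ref{Dirac Lagrangian density compact}), and invoke Lemma~\ref{Nonlinear second order equations lemma}. Your factorisation is correct, and since $L_{r+}(\eta)-L_{r-}(\eta)=2m\rho$, your expression $-\tfrac{16}{9\rho}L_{r+}L_{r-}$ is literally the paper's formula (\ref{factorization formula}).

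Where you go astray is in your account of what the lemma does and why it is needed. There is no ``gap between a scalar vanishing and the full spinor Dirac equation'' to be bridged by a ``spinorial square root extraction'', and Remark~\ref{remark about indeterminate sign} plays no role here: the Euler--Lagrange equation of the real scalar $L_{rs}$ with respect to the full spinor $\eta$ is already the full Dirac equation $D_{rs}\eta=0$. The actual mechanism of the lemma is, first, scaling covariance (\ref{scaling covariance}), $L(e^h u)=e^{2h}L(u)$, which forces $L(u)=0$ on any critical point and hence forces one of $L_{r+}(u)$, $L_{r-}(u)$ to vanish identically; and, second, the quotient structure (\ref{second order Lagrangian density}), which guarantees that on the zero set of, say, $L_{r+}$ the coefficient $L_{r-}/(L_{r+}-L_{r-})$ multiplying $\delta L_{r+}$ equals the \emph{constant} $-1$, whence $\delta\!\int\!L=-\delta\!\int\!L_{r+}$ as in (\ref{formula for variation of our action}) and the two sets of Euler--Lagrange equations coincide. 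This is also why your treatment of $\rho^{-1}$ misses the point: it is not a nuisance prefactor whose variation must be checked to be harmless, it is precisely what converts the bare product $L_{r+}L_{r-}$ --- to which the lemma does \emph{not} apply, since the field equation of a bare product contains cross terms proportional to $(\partial_\alpha L_{r-})\,\partial L_{r+}/\partial(\partial_\alpha\bar\eta)$ that do not vanish on solutions of $D_{r+}\eta=0$ --- into the quotient form required by the lemma. With the factorisation you already have, the correct finish is simply to note that $L_{r+}-L_{r-}=2m\rho>0$ by (\ref{density is positive for eta}), so hypothesis (\ref{denominator is nonzero}) holds, and to apply the lemma verbatim to the formally self-adjoint operators $A_\pm=D_{r\pm}$.
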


\begin{proof}
Put
\begin{equation}
\label{Dirac Lagrangian density}
L_{rs}(\eta):=\frac12
\bigl[
\bar\eta^{\dot a}\sigma^\alpha{}_{\dot ab}(i\partial_\alpha+rA_\alpha)\eta^b
-\eta^b\sigma^\alpha{}_{\dot ab}(i\partial_\alpha-rA_\alpha)\bar\eta^{\dot a}
\bigr]
+sm\bar\eta^{\dot a}\sigma^3{}_{\dot ab}\eta^b.
\end{equation}
This is the Lagrangian density for the Dirac equation
$D_{rs}\eta=0$.
Formula (\ref{Dirac Lagrangian density})
can be rewritten in more compact form as
\begin{equation}
\label{Dirac Lagrangian density compact}
L_{rs}(\eta)
=\Bigl(
-\frac34*T_{Ar}^\mathrm{ax}+sm
\Bigr)\rho
\end{equation}
where
$*T_{Ar}^\mathrm{ax}$, $r=\pm$, is the Hodge dual of axial torsion
defined by formula (\ref{axial torsion via spinor with A for eta})
and $\rho$ is the density
defined by formula (\ref{formula for density for eta}).
Comparing formulas
(\ref{our Lagrangian density plus minus compact})
and
(\ref{Dirac Lagrangian density compact})
we get
\begin{equation}
\label{factorization formula}
L_r(\eta)=-\frac{32m}9\,
\frac{L_{r+}(\eta)\,L_{r-}(\eta)}{L_{r+}(\eta)-L_{r-}(\eta)}\,.
\end{equation}
Note that the denominator in the above formula is nonzero because
condition (\ref{density is positive for eta}) can be equivalently
rewritten as $L_{r+}(\eta)>L_{r-}(\eta)$.

The result now follows from formula
(\ref{factorization formula}) and Lemma~\ref{Nonlinear second order equations lemma}
(see Appendix~\ref{Nonlinear second order equations}).~$\square$
\end{proof}

\section{The sign in the inequality (\ref{density is positive})}
\label{The sign in the inequality}

In Section \ref{Switching to the language of spinors}, when
switching to the language of spinors, we chose to work with spinor
fields $\xi$ satisfying the inequality (\ref{density is positive}).
It is natural to ask the question what happens if we choose to work
with spinor fields $\tilde\xi$ satisfying the inequality
\begin{equation}
\label{density is negative}
\bar{\tilde\xi}^{\dot a}\sigma_{3\dot ab}{\tilde\xi}^b<0.
\end{equation}
One can check that in this case all our arguments can be repeated
with minor changes. Namely, in dimension 1+2 a coframe $\vartheta$
and a positive density $\rho$ are equivalent to a 2-component
complex-valued spinor field $\,\tilde\xi$ satisfying
the inequality~(\ref{density is negative}),
with this equivalence described
by a slightly modified version of formulas
(\ref{formula for density})--(\ref{formula for coframe elements 1 and 2}).
In the end we get an analogue of Theorem \ref{main theorem} for such spinors.

In fact, there is no need to repeat our arguments because there is a
bijection between
spinor fields $\xi$ satisfying the inequality (\ref{density is positive})
and
spinor fields $\tilde\xi$ satisfying the inequality (\ref{density is negative}):
\begin{equation}
\label{bijection}
\xi\mapsto{\tilde\xi}^c=\epsilon^{cb}\sigma_{3\dot ab}\bar\xi^{\dot a},
\qquad
\tilde\xi\mapsto\xi^c=\epsilon^{cb}\sigma_{3\dot ab}\bar{\tilde\xi}^{\dot a}.
\end{equation}

We do not view the transformation (\ref{bijection}) as physically
significant because the primary dynamical variables in our model are
coframe and positive density, not the spinor field. We view the
spinor field merely as a convenient change of dynamical variables.
If two different spinor fields correspond to the same coframe and
positive density we interpret them as the same particle. In
group-theoretical language this means that our model is built on the
basis of the pseudo-orthogonal group $\mathrm{SO}(1,2)$ rather than
the spin group $\mathrm{Spin}(1,2)$.

\section{Plane wave solutions}
\label{Plane wave solutions}

In this section we construct a special class of explicit solutions
of the field equations for our Lagrangian density
(\ref{our Lagrangian density with A}).
This construction is presented, initially, in the language of spinors
and under the additional assumption that
the electromagnetic covector potential $A$ is zero.

We seek solutions of the form
\begin{equation}
\label{Plane wave solutions equation 1}
\xi(x^0,x^1,x^2,x^3)=
e^{-i(p\cdot x+rmx^3)}\zeta
\end{equation}
where $p=(p_0,p_2,p_3)$ is a real constant covector,
$r$ takes the values $\pm1$
and $\zeta\ne0$ is a constant spinor.
We shall call solutions of the type
(\ref{Plane wave solutions equation 1}) \emph{plane wave\/}.
In seeking plane wave solutions
what we are doing is separating out all the variables,
namely, the original variables $x=(x^0,x^1,x^2)$
(coordinates on $\mathbb{M}^{1+2}$) and the extra variable $x^3$
(Kaluza--Klein coordinate).

As usual, our spinor field $\xi$ is assumed to
satisfy the inequality (\ref{density is positive}).
As explained in Section \ref{The sign in the inequality},
this assumption does not lead to the loss of solutions.

Our field equation (\ref{field equation})
is highly nonlinear so it is
not \emph{a priori} clear that one can seek solutions in the form of plane
waves. However, plane wave solutions are a special case of
solutions of the type
(\ref{xi in terms of eta})
and these have already been analyzed in preceding sections.
Namely, Theorem~\ref{main theorem} gives us an algorithm for the calculation
of all plane wave solutions (\ref{Plane wave solutions equation 1})
by reducing the problem to Dirac equations
\begin{equation}
\label{Plane wave solutions equation 2}
D_{rs}\eta=0
\end{equation}
for the $x^3$-independent spinor field
\begin{equation}
\label{Plane wave solutions equation 3}
\eta(x^0,x^1,x^2)=e^{-ip\cdot x}\zeta.
\end{equation}
Here $r$ is the same as in formula
(\ref{Plane wave solutions equation 1}),
i.e.~a number taking the values $\pm1$,
and $s$ is another number, also taking, independently, the values $\pm1$.
By $D_{rs}$ we denote the differential operators
(\ref{Dirac operator}).

Clearly, a Dirac equation (\ref{Plane wave solutions equation 2})
has a nontrivial plane
wave solution $\eta$ if and only if the momentum $p$ satisfies the condition
$\|p\|^2+m^2=0$, so $p$ is timelike. Our model
is invariant under proper Lorentz transformations of coordinates
$(x^0,x^1,x^2)$ so without loss of generality
we can assume that
\begin{equation}
\label{Plane wave solutions equation 4}
p_1=p_2=0.
\end{equation}
Combining formulas
(\ref{Dirac operator}),
(\ref{Pauli matrix temporal}),
(\ref{Pauli matrices spatial}),
(\ref{Plane wave solutions equation 3})
and
(\ref{Plane wave solutions equation 4})
we see that
the Dirac equation (\ref{Plane wave solutions equation 2}) takes the form
\begin{equation}
\label{Plane wave solutions equation 5}
\begin{pmatrix}
-p_0+sm&0\\
0&-p_0-sm
\end{pmatrix}
\begin{pmatrix}
\zeta^1\\
\zeta^2
\end{pmatrix}=0.
\end{equation}
Equation (\ref{Plane wave solutions equation 3}) has a nontrivial
solution satisfying the inequality (\ref{density is positive})
only if
\begin{equation}
\label{Plane wave solutions equation 6}
p_0=sm
\end{equation}
with the corresponding $\zeta$ given, up to scaling by a nonzero
complex factor, by the formula
\begin{equation}
\label{Plane wave solutions equation 7}
\zeta^d=
\begin{pmatrix}
1\\
0
\end{pmatrix}.
\end{equation}

Combining formulas (\ref{Plane wave solutions equation 1}),
(\ref{Plane wave solutions equation 4}),
(\ref{Plane wave solutions equation 6}) and
(\ref{Plane wave solutions equation 7})
we conclude that our model admits, up to a proper Lorentz
transformation of the coordinate system in $\mathbb{M}^{1+2}$
and complex scaling, four plane
wave solutions and that these plane wave solutions are given by the
explicit formula
\begin{equation}
\label{Plane wave solutions equation 8}
\xi^d=
\begin{pmatrix}
1\\
0
\end{pmatrix}
e^{-im(sx^0+rx^3)}\,.
\end{equation}
Here the numbers $r$ and $s$ can, independently, take values $\pm1$.

Let us now rewrite the plane wave solutions
(\ref{Plane wave solutions equation 8})
in terms of our original dynamical variables, coframe $\vartheta$
and density $\rho$. Substituting formulae
(\ref{Pauli matrix temporal}),
(\ref{Pauli matrices spatial})
and
(\ref{Plane wave solutions equation 8})
into formulae
(\ref{formula for density})--(\ref{formula for coframe elements 1 and 2})
we get $\rho=1$, $\vartheta^0{}_\alpha=\delta^0{}_\alpha$ and
\begin{equation}
\label{Plane wave solutions equation 9}
\vartheta^1{}_\alpha
=\begin{pmatrix}
0\\
\cos2m(sx^0+rx^3)\\
\sin2m(sx^0+rx^3)
\end{pmatrix},
\qquad
\vartheta^2{}_\alpha
=\begin{pmatrix}
0\\
-\sin2m(sx^0+rx^3)\\
\cos2m(sx^0+rx^3)
\end{pmatrix}.
\end{equation}

In order to distinguish the two spins we fix $x^3$ and examine how
the covectors $\vartheta^1$ and $\vartheta^2$ evolve as a function
of time $x^0$. We say that spin is up if the rotation is
counterclockwise and spin is down if the rotation is clockwise.
Examination of formula~(\ref{Plane wave solutions equation 9})
shows that we have spin up if $s=+1$
and spin down if $s=-1$.

We will now establish which of the solutions
(\ref{Plane wave solutions equation 9})
describe the electron and which describe the positron.
Let us introduce a weak constant positive electric
field, $0<A_0<m$ and $A_1=A_2=0$.
Then we can repeat the calculation leading up to
formula (\ref{Plane wave solutions equation 9}), only now
we get
\begin{multline}
\label{Plane wave solutions equation 10}
\vartheta^1{}_\alpha
=\begin{pmatrix}
0\\
\cos2[(sm-rA_0)x^0+rmx^3]\\
\sin2[(sm-rA_0)x^0+rmx^3]
\end{pmatrix},
\\
\vartheta^2{}_\alpha
=\begin{pmatrix}
0\\
-\sin2[(sm-rA_0)x^0+rmx^3]\\
\cos2[(sm-rA_0)x^0+rmx^3]
\end{pmatrix}.
\end{multline}
We define quantum mechanical energy as
\begin{equation}
\label{Plane wave solutions equation 11}
\varepsilon:=|sm-rA_0|
\end{equation}
which is half the angular frequency (as a function of time $x^0$)
of the solution~(\ref{Plane wave solutions equation 10}).
We say that we are dealing with an electron if $\varepsilon<m$ and
with a positron if $\varepsilon>m$. Examination of formula
(\ref{Plane wave solutions equation 11}) shows that we are looking
at an electron if the signs of $r$ and $s$ are the same and
at a positron if the signs of $r$ and $s$ are opposite.
This means that
the electron is described by a wave traveling in the negative $x^3$-direction
whereas
the positron is described by a wave traveling in the positive $x^3$-direction.

Our classification of plane wave solutions is summarized in Table
\ref{table 1}.

\begin{table}[htb]
\caption{Classification of solutions (\ref{Plane wave solutions equation 9})}
\label{table 1}
\begin{tabular}{c|c|c}
{}&$s=+1$&$s=-1$\\
\hline
$r=+1$&Electron with spin up&Positron with spin down\\
\hline
$r=-1$&Positron with spin up&Electron with spin down
\end{tabular}
\end{table}

\section{Discussion}
\label{Discussion}

\subsection{Distinguishing the electron from the positron}

The mathematical model presented in this paper allows us to clearly
distinguish the electron from the positron.
This is
achieved by using the coframe and positive density as our primary
dynamical variables rather than the more traditional spinor field.
In other words,
as explained in the end of Section~\ref{The sign in the inequality},
our model is built on the basis of the pseudo-orthogonal
group $\mathrm{SO}(1,2)$ rather than the spin group
$\mathrm{Spin}(1,2)$.

\subsection{Problem of vanishing density}

The only technical assumption in our analysis is that the density
$\rho$ does not vanish. Rephrased in terms of the spinor field, this
assumption reads as
\begin{equation}
\label{density is nonzero}
\bar\xi^{\dot a}\sigma_{3\dot ab}\xi^b\ne0,
\end{equation}
compare with (\ref{density is positive}) and (\ref{density is negative}).
We do not know how to drop the
assumption (\ref{density is nonzero}).

\subsection{Curved spacetime}
\label{Curved spacetime}

One of the advantages of our mathematical model is that it does not
use covariant differentiation (only exterior differentiation) so the
generalization to the case of a
curved\footnote{Here ``curved'' refers to the curvature
of the Levi-Civita connection generated by the metric~$g$,
as is customary in General Relativity.}
(1+2)-dimensional spacetime
is absolutely straightforward. Covariant derivatives appear only
when we switch from coframe and density to a spinor field. All our
analysis, including Theorem~\ref{main theorem}, carries over to the
case of curved spacetime. We chose our (1+2)-dimensional spacetime
to be flat only to make the exposition clearer.

\subsection{Exclusion of gravity}

We assumed the (1+2)-dimensional metric $g$ to be
prescribed (fixed) and the coframe $\vartheta$ to be chosen so as to
satisfy the kinematic constraint (\ref{constraint for coframe}). As
explained in subsection \ref{Curved spacetime}, the fact
that we chose the metric $g$ to be Minkowski is irrelevant and all
our analysis carries over to the case of an arbitrary Lorentzian
metric in dimension 1+2. The important thing is that the metric $g$
is not treated as a dynamical variable. This means that we chose to
exclude gravity from our model.

On the other hand, in teleparallelism it is traditional to view the
metric as a dynamical variable. In other words, in teleparallelism
it is customary to view~(\ref{constraint for coframe}) not as a
kinematic constraint but as a definition of the metric and,
consequently, to vary the coframe $\vartheta$ without any
constraints. This is not surprising as most, if not all, authors who
contributed to teleparallelism came to the subject from General
Relativity.

It appears that the idea of working with a coframe subject to the
kinematic constraint (\ref{constraint for coframe}) is new.

\subsection{Our choice of Lagrangian}

We chose a very particular Lagrangian density~(\ref{basic Lagrangian density})
containing only one irreducible piece of torsion (axial) whereas in
teleparallelism it is traditional to choose a more general
Lagrangian containing all three pieces
(axial, vector and tensor)
of the torsion tensor
\begin{equation}
\label{definition of torsion extended}
\mathbf{T}^\mathrm{ax}:=\mathbf{o}_{\mathbf{j}\mathbf{k}}
{\bm\vartheta}^{\mathbf{j}}\otimes d{\bm\vartheta}^{\mathbf{k}},
\end{equation}
see formula (26)
in \cite{MR2419830}.
Note that when Einstein introduced teleparallelism
\cite{1001.4726} he neglected the axial piece
(\ref{definition of axial torsion extended})
completely.

In choosing our particular Lagrangian density (\ref{basic Lagrangian
density}) we were guided by the principles of conformal invariance,
simplicity and analogy with Maxwell's theory. The analogy with
Maxwell's theory is that we characterize the field strength by a
differential form, replacing the electromagnetic tensor (2-form) by
axial torsion (3-form). It appears that the
Lagrangian density~(\ref{basic Lagrangian density}) was never examined.

\subsection{Density as a dynamical variable}

We took the positive density of our continuum to be a dynamical
variable whereas in teleparallelism the tradition is to prescribe it
as $\rho=\sqrt{|\det g|}\,$. Taking $\rho$ to be a dynamical
variable is, of course, equivalent to introducing an extra real
positive scalar field into our model. It appears that the idea of
making the density a dynamical variable is also new.

\subsection{Electron in dimension 1+3}

The major outstanding issue is whether we can reformulate the Dirac
equation in dimension 1+3 using our approach. This would mean
starting from (1+3)-dimensional spacetime, performing a
Kaluza--Klein extension to dimension 1+4, choosing the conformally
invariant Lagrangian density (\ref{basic Lagrangian density}) and so
on, as described in Section \ref{Our model}.

It seems that the equation we get starting from
(1+3)-dimensional spacetime and performing the construction
described in Section \ref{Our model} is not the Dirac equation in
dimension 1+3. Our analysis is heavily dependent on dimension and,
when starting from (1+3)-dimensional spacetime, we do not appear to
get a factorization of the Lagrangian density of the type
(\ref{factorization formula}).

However, the equation we get in dimension 1+3, although nonlinear,
seems to be very similar to the Dirac equation. The natural way of
testing how close our equation is to the Dirac equation would be to
calculate the energy spectrum of the electron in a given static
electromagnetic field, starting with the case of the Coulomb
potential (hydrogen atom).

\subsection{Similarity with the Ashtekar--Jacobson--Smolin construction}

The analysis presented in our paper exhibits certain similarities
with \cite{MR933459,MR1634502} in that a 3-dimen\-sional (or, in our
case, (1+2)-dimensional) coframe $\vartheta$ is used as a dynamical
variable and that a second order partial differential equation is
reduced to a first order equation.

\appendix

\section{Notation}
\label{Notation}

Our notation follows
\cite{MR2573111,1001.4726,prd2007}.
The only difference with
\cite{MR2573111,prd2007}
is that in the latter the Lorentzian metric
has opposite signature.
In \cite{1001.4726} the signature is the same as in the current paper,
i.e. the (1+3)-dimensional metric
has signature $\,{-+++}\,$.

We use Greek letters for tensor (holonomic) indices and Latin
letters for frame (anholonomic) indices.

We identify differential forms with covariant antisymmetric tensors.
Given a pair of real covariant antisymmetric tensors $P$ and $Q$ of
rank $r$ we define their dot product as
$
P\cdot Q:=\frac1{r!}P_{\alpha_1\ldots\alpha_r}Q_{\beta_1\ldots\beta_r}
g^{\alpha_1\beta_1}\ldots g^{\alpha_r\beta_r}
$.
We also define $\|P\|^2:=P\cdot P$.

We define the action of the Hodge star on a rank
$r$ antisymmetric tensor $R$ as
$
(*R)_{\alpha_{r+1}\ldots\alpha_3}:=(r!)^{-1}\,
R^{\alpha_1\ldots\alpha_r}\varepsilon_{\alpha_1\ldots\alpha_3}
$
where $\varepsilon$ is the totally antisymmetric quantity,
$\varepsilon_{012}:=+1$.

We use two-component complex-valued spinors (Weyl spinors) whose
indices run through the values $1,2$ or $\dot1,\dot2$. Complex
conjugation makes the undotted indices dotted and vice versa.

We define the ``metric spinor''
\begin{equation}
\label{metric spinor}
\epsilon_{ab}=\epsilon_{\dot a\dot b}=
\epsilon^{ab}=\epsilon^{\dot a\dot b}=
\begin{pmatrix}
0&-1\\
1&0
\end{pmatrix}
\end{equation}
and choose Pauli matrices
\begin{equation}
\label{Pauli matrix temporal}
\sigma_{0\dot ab}\!=\!
\begin{pmatrix}
1&0\\
0&1
\end{pmatrix}
\!=-\sigma^0{}_{\dot ab},
\end{equation}
\begin{equation}
\label{Pauli matrices spatial}
\sigma_{1\dot ab}\!=\!
\begin{pmatrix}
0&1\\
1&0
\end{pmatrix}
\!=\sigma^1{}_{\dot ab},
\quad
\sigma_{2\dot ab}\!=\!
\begin{pmatrix}
0&-i\\
i&0
\end{pmatrix}
\!=\sigma^2{}_{\dot ab},
\quad
\sigma_{3\dot ab}\!=\!
\begin{pmatrix}
1&0\\
0&-1
\end{pmatrix}
\!=\sigma^3{}_{\dot ab}\,.
\end{equation}
Here the first index enumerates rows and the second enumerates
columns

\section{Nonlinear second order equations
which reduce to pairs of linear first order equations}
\label{Nonlinear second order equations}

Let $\Omega$ be an open subset of $\mathbb{R}^n$. We work with
(infinitely) smooth vector functions $\Omega\to\mathbb{C}^m$
writing these as columns of $m$ complex scalars. In this appendix
``vector'' does not carry a differential geometric meaning
because we are not interested in coordinate transformations.
We use Cartesian coordinates $x^1,\ldots,x^n$.

Given a pair of vector functions $u,v:\Omega\to\mathbb{C}^m$ we
define their inner product in the standard Euclidean manner as
$
(u,v):=\int_\Omega v^*u\,dx^1\ldots dx^n
$
where the star~$*$ denotes Hermitian conjugation. This integral
need not converge as we will be using it only
for the purpose of defining the formal adjoint of a differential
operator, see next paragraph.

Let $A_\pm$ be a pair of formally self-adjoint (symmetric) first
order linear partial differential operators (differential
expressions) with smooth coefficients acting on smooth vector
functions $\Omega\to\mathbb{C}^m$. We do not introduce any
boundary conditions.

Put
\begin{equation}
\label{first order Lagrangian density}
L_\pm(u):=\operatorname{Re}(u^*A_\pm u).
\end{equation}
It is easy to see that $L_\pm(u)$ is the Lagrangian density for the
partial differential equation $A_\pm u=0$. Namely, if one writes
down the action (variational functional)
$
S_\pm(u):=\int_\Omega L_\pm(u)\,dx^1\ldots dx^n
$
then the corresponding field equation (Euler--Lagrange equation) is
$A_\pm u=0$.

Let us now define a new Lagrangian density
\begin{equation}
\label{second order Lagrangian density}
L(u):=
\frac{L_+(u)\,L_-(u)}{L_+(u)-L_-(u)}
\end{equation}
and corresponding action
$
S(u):=\int_\Omega L(u)\,dx^1\ldots dx^n
$.
The field equation for the Lagrangian density
(\ref{second order Lagrangian density}) is, of course,
second order and nonlinear.

Note that the notation in this appendix is self-contained and the
Lagrangian densities (\ref{first order Lagrangian density}),
(\ref{second order Lagrangian density}) should not be confused
with the Lagrangian densities
(\ref{our Lagrangian density with A via spinor for eta}),
(\ref{Dirac Lagrangian density}) introduced in the main text
(the latter have an extra subscript).

The main result of this appendix is

\begin{lemma}
\label{Nonlinear second order equations lemma}
Let $u:\Omega\to\mathbb{C}^m$ be a vector function satisfying
the condition
\begin{equation}
\label{denominator is nonzero}
L_+(u)\ne L_-(u).
\end{equation}
Then $u$
is a solution of the field equation
for the Lagrangian density $L$
if and only if it is a solution of
the equation $A_+u=0$
or the equation $A_-u=0$.
\end{lemma}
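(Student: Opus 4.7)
The plan is to compute the Euler--Lagrange expression $E_L$ of the composite Lagrangian $L = f(L_+, L_-)$ with $f(a,b) = ab/(a-b)$ directly via the chain rule, then exploit an algebraic identity to show that the resulting nonlinear PDE $E_L = 0$ is equivalent (pointwise) to $L_+L_- = 0$, from which the two-equation factorization follows by a topological argument.

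Writing $P_\pm^j := \partial L_\pm/\partial(\partial_j u^*)$ and $D := L_+ - L_-$, the chain rule yields
\[
E_L = f_a\,A_+u + f_b\,A_-u - (\partial_j f_a)\,P_+^j - (\partial_j f_b)\,P_-^j,
\qquad f_a = -L_-^2/D^2,\ f_b = L_+^2/D^2,
\]
and a short calculation shows that both $\partial_j f_a$ and $\partial_j f_b$ factor through the scalar $V_j := L_+\partial_j L_- - L_-\partial_j L_+$. The ``easy'' direction is then immediate: if $A_+u \equiv 0$ then $L_+\equiv 0$, so $\partial_j L_+\equiv 0$, whence $V_j \equiv 0$, $f_a = -1$, $f_b = 0$, and the correction terms vanish, giving $E_L = -A_+u = 0$. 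The case $A_-u = 0$ is symmetric.

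For the converse, the key input is the pointwise identity $\operatorname{Re}(u^*E_L) = L$. I would obtain it from Euler's homogeneity relation applied to the degree-$2$ Lagrangian $L$, which produces $2L = 2\operatorname{Re}(u^*E_L) + \partial_j J^j$ with $J^j := u^*\partial L/\partial(\partial_j u^*) + u\,\partial L/\partial(\partial_j u)$, and then verify that $J^j \equiv 0$ pointwise. By the chain rule $J^j = f_a J_+^j + f_b J_-^j$, so it suffices to check $J_\pm^j = 0$ for the building-block Lagrangians $L_\pm = \operatorname{Re}(u^*A_\pm u)$, which is a short calculation using Hermiticity of the principal symbol of $A_\pm$ (the $u$-momentum and $u^*$-momentum are complex conjugates of one another, and their contractions against $u$ and $u^*$ cancel). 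Granted $\operatorname{Re}(u^*E_L) = L$, the hypothesis $E_L = 0$ forces $L\equiv 0$ and hence $L_+L_-\equiv 0$ on $\Omega$; combined with $L_+\ne L_-$, the closed sets $\{L_+=0\}$ and $\{L_-=0\}$ are disjoint and cover $\Omega$, so each is clopen, and on each connected component of $\Omega$ one of them is the whole component. Re-evaluating the EL formula on such a component---say where $L_-\equiv 0$, so $\partial_j L_- = 0$, $V_j = 0$, $f_a = 0$, $f_b = 1$---collapses it to $E_L = A_-u$, yielding $A_-u = 0$; analogously $A_+u = 0$ on components with $L_+\equiv 0$.

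I expect the main obstacle to be establishing the pointwise identity $\operatorname{Re}(u^*E_L) = L$, i.e.\ the auxiliary vanishing $J^j \equiv 0$. This is what converts an opaque nonlinear second-order PDE into the pointwise product constraint $L_+L_- = 0$, from which the factorization follows mechanically; it is a genuine peculiarity of first-order Lagrangians of the form $\operatorname{Re}(u^*Au)$ and is not, a priori, suggested by the algebraic form of $f(a,b) = ab/(a-b)$.
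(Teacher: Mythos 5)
Your proof is correct, but it takes the route that the paper deliberately avoids: you compute the Euler--Lagrange expression of $L=f(L_+,L_-)$ explicitly via the chain rule, whereas the paper's own proof is purely variational and never writes the field equation down. The paper's argument rests on the single observation of \emph{scaling covariance}, $L_\pm(e^hu)=e^{2h}L_\pm(u)$ (hence the same for $L$), from which it deduces (i) that any critical point of a scaling-covariant density $\mathcal L$ satisfies $\mathcal L(u)=0$, and (ii) the identity $\delta\!\int\!L(u)=-\delta\!\int\!L_\pm(u)$ valid whenever $L_\pm(u)=0$, which transfers criticality back and forth between $L$ and $L_\pm$ in one line. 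Your pointwise identity $\operatorname{Re}(u^*E_L)=L$ is exactly the unintegrated form of the paper's step (i) --- Euler's degree-$2$ homogeneity relation plus the vanishing of the current $J^j$ is the same structural fact as scaling covariance --- so the two proofs share their essential engine; but where the paper then argues abstractly with variations, you collapse the explicit EL formula using the factorization of $\partial_jf_a$, $\partial_jf_b$ through $V_j=L_+\partial_jL_--L_-\partial_jL_+$. In fact the paper's remark following the proof explicitly names your strategy as the ``alternative, completely rigorous'' proof that it declines to carry out because ``the calculations become much longer''; you have supplied those calculations, and they check out ($f_a=-L_-^2/D^2$, $f_b=L_+^2/D^2$, $J^j_\pm=0$ by Hermiticity of $B^j_\pm$). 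One genuine bonus of your version: the clopen argument on $\{L_+=0\}$ and $\{L_-=0\}$ makes precise the passage from the pointwise statement $L_+L_-\equiv0$ to the dichotomy in the conclusion, a step the paper glosses over with ``assume for definiteness that $L_+(u)=0$''; note, though, that on a disconnected $\Omega$ both arguments really only yield the dichotomy component by component.
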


\begin{proof}
The explicit formula for the operator $A_\pm$ is
\begin{equation}
\label{explicit formula for operator A plus minus}
A_\pm=iB_\pm^\alpha\partial_\alpha+\frac i2(\partial_\alpha B_\pm^\alpha)
+C_\pm
\end{equation}
where $B_\pm^\alpha$ and $C_\pm$ are some smooth Hermitian $m\times m$
matrix functions and the index $\alpha$ runs through the values $1,\ldots,n$.
Substituting (\ref{explicit formula for operator A plus minus}) into
(\ref{first order Lagrangian density}) we get
\begin{equation}
\label{explicit formula for first order Lagrangian density}
L_\pm(u)=
\frac i2\bigl[
u^*B_\pm^\alpha\partial_\alpha u
-
(\partial_\alpha u^*)B_\pm^\alpha u
\bigr]
+u^*C_\pm u.
\end{equation}

Now take an arbitrary smooth function $h:\Omega\to\mathbb{R}$.
Examination of formula
(\ref{explicit formula for first order Lagrangian density}) shows that
\begin{equation}
\label{scaling covariance}
L_\pm(e^h u)=e^{2h}L_\pm(u).
\end{equation}
We call the property (\ref{scaling covariance}) \emph{scaling
covariance}. Scaling covariance is a remarkable feature of the
Lagrangian density of a formally self-adjoint first order linear partial
differential operator.

Formulas (\ref{second order Lagrangian density}) and
(\ref{scaling covariance}) imply that the Lagrangian density $L$
also possesses the property of scalar covariance, i.e.
$L(e^h u)=e^{2h}L(u)$ for any smooth $h:\Omega\to\mathbb{R}$.
Thus, all three of our Lagrangian densities,
$L$, $L_+$ and $L_-$, have this property.

Observe now that if the vector function $u$ is a solution
of the field equation for
some Lagrangian density $\mathcal{L}\,$
possessing the property of scaling covariance
then $\mathcal{L}(u)=0$. Indeed, let us perform a scaling
variation of our vector function
\begin{equation}
\label{scaling variation}
u\mapsto u+\delta u= u+hu=e^hu+O(h^2)
\end{equation}
where $h:\Omega\to\mathbb{R}$
is an arbitrary ``small'' smooth function
with compact support, $h\in C_0^\infty(\Omega;\mathbb{R})$. Then
$0=\delta\!\int\!\mathcal{L}(u)=2\int h\mathcal{L}(u)$
which holds for arbitrary $h$ only if $\mathcal{L}(u)=0$.

In the remainder of the proof the variation
$\delta u:\Omega\to\mathbb{C}^m$ of the vector
function $u:\Omega\to\mathbb{C}^m$ is arbitrary
and not necessarily of the scaling type
(\ref{scaling variation}). The only assumption
is that $\delta u\in C_0^\infty(\Omega;\mathbb{C}^m)$.

Suppose that $u$ is a solution of the field equation for
the Lagrangian density $L_+$.
[The case when $u$ is a solution of the field equation for
the Lagrangian density $L_-$ is handled similarly.]
Then $L_+(u)=0$ and, in view of formula (\ref{denominator is nonzero}),
$L_-(u)\ne0$.
Varying $u$ we get
\begin{multline*}
\delta\!\int\!L(u)
=
\int
\frac{L_-(u)}
{L_+(u)-L_-(u)}
\,\delta L_+(u)
+
\int
L_+(u)
\,\delta\frac{L_-(u)}
{L_+(u)-L_-(u)}
\\
=-\int\delta L_+(u)
=-\delta\!\int\!L_+(u)
\end{multline*}
so
\begin{equation}
\label{formula for variation of our action}
\delta\!\int\!L(u)=-\delta\!\int\!L_+(u)\,.
\end{equation}
We assumed that $u$ is a solution of the field equation for
the Lagrangian density $L_+$ so
$\delta\!\int\!L_+(u)=0$ and
formula~(\ref{formula for variation of our action}) implies that
$\delta\!\int\!L(u)=0$. As the latter is true for an
arbitrary variation of $u$ this means that
$u$ is a solution of the field equation for the Lagrangian
density~$L$.

Suppose that $u$ is a solution of the field equation for
the Lagrangian density $L$.
Then $L(u)=0$ and formula~(\ref{second order Lagrangian density})
implies that either $L_+(u)=0$ or $L_-(u)=0$;
note that in view of (\ref{denominator is nonzero}) we cannot have simultaneously
$L_+(u)=0$ and $L_-(u)=0$.
Assume for definiteness that $L_+(u)=0$.
[The case when $L_-(u)=0$ is handled similarly.]
Varying~$u$ and repeating the argument from the previous paragraph
we arrive at (\ref{formula for variation of our action}).
We assumed that $u$ is a solution of the field equation for
the Lagrangian density $L$ so
$\delta\!\int\!L(u)=0$ and formula
(\ref{formula for variation of our action}) implies that
$\delta\!\int\!L_+(u)=0$. As the latter is true for an
arbitrary variation of $u$ this means that
$u$ is a solution of the field equation for the Lagrangian
density~$L_+$.~$\square$
\end{proof}

\begin{remark}
It may seem that the variational proof presented above is
``insufficiently rigorous''. An alternative ``completely rigorous''
way of proving Lemma~\ref{Nonlinear second order equations lemma}
is to write down the field equation for the Lagrangian density
(\ref{second order Lagrangian density}),
(\ref{explicit formula for first order Lagrangian density})
explicitly and analyze this second order nonlinear partial differential equation.
The result, of course, remains the same, only the calculations become
much longer.
\end{remark}

\begin{remark}
Examination of the proof of
Lemma~\ref{Nonlinear second order equations lemma}
shows that the fact that the differential operators
$A_\pm$ are linear and first order is not important.
What is important is that their Lagrangian densities possess
the scaling covariance property (\ref{scaling covariance}).
As the Lagrangian density (\ref{second order Lagrangian density})
possesses this property as well, our construction
admits an obvious extension which gives a hierarchy of non\-linear partial
differential equations
which reduce to several separate equations.
\end{remark}

\begin{example}
Let us give an elementary example illustrating the use of
Lemma~\ref{Nonlinear second order equations lemma}.
Consider the pair of linear first order ordinary differential equations
\begin{equation}
\label{example equation 1}
iu'\pm u=0
\end{equation}
where $u:\mathbb{R}\to\mathbb{C}$ is a scalar function.
Let us write down the corresponding Lagrangian densities
$L_\pm(u)=\frac i2(\bar uu'-u\bar u')\pm|u|^2$ in accordance with
formula~(\ref{first order Lagrangian density})
and form a new Lagrangian density
$
-2L(u)
=\bigl(
\frac{\bar uu'-u\bar u'}{2|u|}
\bigr)^2\!+|u|^2
$
in accordance with formula (\ref{second order Lagrangian density}).
The latter gives the field equation (Euler--Lagrange equation)
\begin{equation}
\label{example equation 2}
\left(
\frac{\bar uu'-u\bar u'}{2|u|^2}\,u
\right)'
+\frac{(\bar uu')^2-(u\bar u')^2}{4|u|^4}\,u
+u=0.
\end{equation}
Lemma~\ref{Nonlinear second order equations lemma} tells us that a smooth
nonvanishing function $u$ is a solution of
equation~(\ref{example equation 2})
if and only if it is a solution of one of the two
equations~(\ref{example equation 1}).
Of course, this fact can be checked directly by switching to
the polar representation
$u=re^{-i\varphi}$
where
$r:\mathbb{R}\to(0,+\infty)$ and
$\varphi:\mathbb{R}\to\mathbb{R}$.
\end{example}

\end{document}